\newtheorem{theorem}{Theorem}
\newtheorem{lemma}{Lemma}
\def\given{\:|\:}
\def\L{\mathsf{L}}
\def\H{\mathsf{H}}
\def\U{\mathsf{U}}
\def\B{\mathsf{B}}
\newcommand{\cfb}{C_{\mathrm{fb}}}
\newcommand{\ciid}{C_{\mathrm{iid}}}
\def\Pr{{\mathrm{Pr}}}
\begin{document}

\title{Capacity of a Simple Intercellular Signal Transduction Channel}

\author{
	\authorblockN{Andrew W. Eckford}
	\authorblockA{Dept. of Computer Science and Engineering\\
	York University\\
	Toronto, Ontario, Canada M3J 1P3\\
	Email: aeckford@yorku.ca}
	\and
	\authorblockN{Peter J. Thomas}
	\authorblockA{Dept. of Mathematics and Dept. of Biology\\
	Case Western Reserve University\\
	Cleveland, Ohio, USA 44106-7058\\
	Email: pjthomas@case.edu}%
	\thanks{This work was funded by grants from the National Science Foundation (DMS-0720142, EF-1038677) and the Natural Sciences and Engineering Research Council (NSERC).}%
}

\maketitle

\begin{abstract}
We model the ligand-receptor molecular communication channel with a 
discrete-time Markov model, and show how to obtain the capacity of this channel. 
We show that the capacity-achieving input distribution is iid; further, 
unusually for a channel with memory,
we show that feedback does not increase the capacity of this channel.
\end{abstract}


\section{Introduction}

Microorganisms communicate using
{\em molecular communication}, in which messages are expressed as patterns of molecules,
propagating via diffusion from transmitter to receiver: what can information theory say about 
this communication?
The physics and mathematics of Brownian motion and chemoreception are 
well understood (e.g., \cite{karatzas-book, berg77}), so it is possible
to construct channel models and calculate information-theoretic quantities, such as capacity \cite{NIPS2006}.
We expect that Shannon's channel coding theorem, and other limit theorems in 
information theory, express ultimate limits on reliable communication, not just for 
human-engineered systems, but for naturally occurring systems as well. We can hypothesize that evolutionary pressure may have optimized natural molecular communication
systems with respect to these limits \cite{AgarwalaChielThomas2012JTB}.  Calculating quantities such as
capacity may allow us to make predictions about biological systems, and explain
biological behaviour \cite{CheongRheeWangNemenmanLevchenko2011Science,Thomas:2011:Science}.

Recent work on molecular communication can be divided into two categories.
In the first category, work has focused on the engineering possibilities:
to exploit molecular communication for specialized applications, such as nanoscale networking
\cite{hiy05,par09}. In this direction, information-theoretic work has focused on
the ultimate capacity of these channels, regardless of biological mechanisms (e.g., \cite{eck07,son12}).
In the second category, work has focused on analyzing the biological machinery of
molecular communication (particularly ligand-receptor systems), both to describe the components of a possible communication system
\cite{nak07} and to describe their capacity \cite{ata07,NIPS2003_NS03,ein11,ein11b}. Our
paper, which builds on work presented in \cite{NIPS2003_NS03}, fits into this category, and many tools in the information-theoretic literature can be used to
solve problems of this type. Related work is also found in \cite{ein11}, where capacity-achieving input distributions were 
found for a simplified ``ideal'' receptor; that paper also discusses but does not solve  the capacity
for the channel model we use.

\section{Models}
\label{sec:Model}

{\em Notation.} Capital letters, e.g., $X$, are random variables; lower-case
letters are constants or 
particular values of the corresponding random variable, e.g., $x$ is a 
particular value of $X$. Vectors use superscripts: $X^i$ represents
an $i$-fold random vector with elements $[X_1,X_2,\ldots,X_i]$; $x^i$ represents
a particular value of $X^i$. Script letters, e.g., $\mathcal{X}$, are sets. 
The logarithm is base 2 unless specified.

\subsection{Physical model}
\label{sec:PhysicalModel}

Signalling between biological cells involves the transmission of signalling molecules, or ligands. These ligands propagate through a shared medium until they are absorbed by a receptor on the surface of a cell. Thus, a message can be passed to a cell by affecting the states of the receptors on its surface; moreover, this process can be modelled as a finite state machine. This
setup is depicted in Figure \ref{fig:LigandReceptor}, and our goal in this paper is to calculate the information-theoretic capacity of this channel.

Finite state Markov processes conditional on an input process provide models of signal transduction and communication in a variety of biological systems, including chemosensation via ligand-receptor interaction, \cite{Wang+Rappel+Kerr+Levine:2007:PRE,Ueda+Shibata:2007:BPJ}, 
dynamics of ion channels sensitive to signals carried by voltage, neurotransmitter concentration, or light \cite{Colquhoun+Hawkes:1983chapter,KellerFranksBartolSejnowski2008PLoSOne,NikolicLoizuDegenaarToumazou2010IntegrBiol}.  
Typically a single ion channel or receptor is in one of $n$ states, with instantaneous transition rate matrix $\mathbf{Q} = [q_{jk}]$ depending on an external input $X(t)$. The probability, $p_k$, that the channel is in state $Y(t)=k\in\mathcal{Y}$ evolves according to 
\begin{equation}\label{eq:cts-time-general}
dp_k/dt=\sum_{j=1}^n p_j(t)q_{jk}(X(t))
\end{equation}
where for  $(j\ne k)$, $q_{jk}$ is the input-dependent per capita rate at which the receptor transitions from state $j$ to state $k$, and $q_{jj}=-\sum_{k\ne j}q_{jk}$.
Taking $\{X(t)\}_{t=0}^T$ as the input, and the receptor state $Y(t)\in\mathcal{Y}$ as the output, gives a channel model, the capacity of which is of general interest.

Here we specialize from (\ref{eq:cts-time-general}) to the case of a single receptor that can be in one of two states, either bound to a signaling molecule or ligand ($Y=\B$), or unbound ($Y=\U$) and hence available to bind.  Thus $\mathcal{Y}=\{\U,\B\}$. (In practice, signals are transduced in parallel by multiple receptor protein molecules, however in many instances they act to a good approximation as independent receivers of a common ligand concentration signal, in which case analysis of the single molecule channel can provides a useful reference point.) When the receptor is bound by a ligand molecule, the signal is said to be \emph{transduced};  typically the receptor (a large protein molecule) undergoes a conformational shift upon binding the ligand.  This change then signals the presence of the ligand through a cascade of intracellular reactions catalyzed by the bound receptor.   The ligand-receptor interaction comprises two chemical reactions, a binding reaction (ligand + receptor $\longrightarrow$ bound receptor) with on-rate $k_+$, and a reverse, unbinding reaction with off-rate $k_-$.  In a continuous time model, let $p(t)=\Pr[Y_t=\B]$.  Then (\ref{eq:cts-time-general}) reduces to 
\begin{equation}\label{eq:cts-time-2state}
dp/dt=k_+c(t)(1-p(t)) - k_-p(t),
\end{equation}
where $c(t)$ is the time-varying ligand concentration.  

A key feature distinguishing this channel is that the receptor is insensitive to the input when in the state $Y=\B$, and can only transduce information about the input, $X(t)=k_+c(t)$, when $Y=\U$.  Thus, analysis of the ligand-binding channel is complicated by the receptor's insensitivity to changes
in concentration occurring while the receptor is in the occupied state. 
This fact plays a decisive role in our proof of our main result, which asserts that feedback from the channel state to the input process cannot increase the capacity, in a discrete time analog of this simple model of intercellular communication. 

 In the limit in which 
transition from the bound state back to the unbound state is instantaneous, the ligand-binding channel becomes
a simple counting process, with the input encoded in the time varying intensity.  This situation
is exactly the one considered in Kabanov's analysis of the capacity of a Poisson channel, under a max/min intensity constraint \cite{Davis1980ieeeIT,Kabanov1978}.  For the Poisson channel, the capacity may be achieved by setting the input to be a two-valued random process fluctuating between the maximum and minimum intensities.  If the intensity is restricted to lie in the interval $[1,1+c]$, the capacity is \cite{Kabanov1978}
\begin{equation}\label{eq:Kabanov}
C_{\mbox{Kab}}(c)=\frac{(c+1)^{1+1/c}}{e}-\left(1+\frac{1}{c} \right)\ln(c+1).
\end{equation}
Our long-term goal is to obtain expressions analogous to (\ref{eq:Kabanov}) for the continuous-time systems (\ref{eq:cts-time-general}) and (\ref{eq:cts-time-2state}).  As a first step, we  restrict attention to a discrete time analog of the two-state system (\ref{eq:cts-time-2state}).
Kabanov's formula may be obtained by restricting the input to a two-state discrete time Markov process with input $X(t)$ taking the values $X_{lo}=1$ and $X_{hi}=1+c$, with transitions $X_{lo}\to X_{hi}$ happening with probability $r$, and transitions $X_{hi}\to X_{lo}$ with probability $s$, per time step.  Maximizing the mutual information with respect to $r$ and $s$, and taking the limit of small time steps, yields  (\ref{eq:Kabanov}). 
In addition, Kabanov proved that the capacity of the Poisson channel cannot be increased by allowing feedback.

\subsection{Mathematical model}

Motivated by the preceding discussion, we examine a discrete-time, finite-state Markov representation of both the
transmission process and the observation process.  We also use a two-state Markov chain
to represent the state of the observer.  As in the continuous time case, the receiver may either be in an unbound state, in which the receiver is waiting for a molecule to bind to the receptor, or in a bound state, in which the receiver has captured a molecule, and must release it before capturing another.

\begin{figure}[t!]
	\begin{center}
	\includegraphics[width=3.5in]{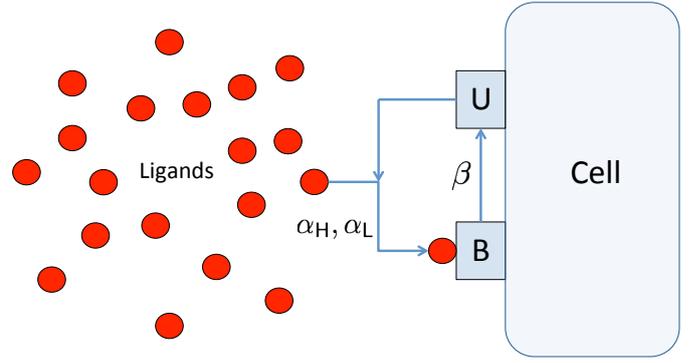}
	\end{center}
	\caption{\label{fig:LigandReceptor} A depiction of our system, in which information is passed to the cell by affecting the state of the receptor. When the ligand binds to the receptor, the receptor enters the $\B$ state, no longer sensitive to the concentration of ligands. When the ligand unbinds (leaves), the cell enters the $\U$ state. While in the $\U$ state, the binding rate is dependent of the concentration, either $\H$ or $\L$. Discrete-time state transition probabilities $\alpha_\H$, $\alpha_\L$, and $\beta$ are illustrated.}
\end{figure}

Let $\mathcal{X} = \{\L,\H\}$ represent the input alphabet, where $\L$ represents low concentration,
and $\H$ represents high concentration. Let $X^n = [X_1, X_2, \ldots, X_n]$  represent a sequence of (random) inputs, where $X_i \in \mathcal{X}$ for all $i$. For now, we make no assumptions on the distribution of $X^n$.
As before, let $\mathcal{Y} = \{\U,\B\}$ represent the output alphabet, where $\U$ represents
the unbound state 
and $\B$ represents the bound state. 
Also, let $Y^n = [Y_1, Y_2, \ldots, Y_n]$ represent a sequence of outputs,
where $Y_i \in \mathcal{Y}$ for all $i$. 

We define parameters 
to bring the continuous-time dynamics, expressed in (\ref{eq:cts-time-general}),
into discrete time.
In our model, the transition probability from $\U$ to $\B$ 
(called the {\em binding rate})
is dependent on the input concentration $x_i$. However, the transition probability
from $\B$ to $\U$ (called the {\em unbinding rate}) is independent of $x_i$.
Thus, given $x^n$, $y^n$ forms a nonstationary Markov chain with
three parameters:
\begin{itemize}
	\item $\alpha_\L$, the binding rate given $x_i = \L$;
	\item $\alpha_\H$, the binding rate given $x_i = \H$; and 
	\item $\beta$, the unbinding rate (independent of $x_i$).
\end{itemize}
We assume $\alpha_\H \geq \alpha_\L$, since binding is more likely at high concentration.

If $x_i = \L$, then
the transition probability matrix is given by
\begin{equation}
	\label{eqn:LowConcentrationOutput}
	\mathbf{P}_{Y|X=\L} = 
	\left[
		\begin{array}{cc}
			1 - \alpha_\L & \alpha_\L \\
			\beta & 1 - \beta
		\end{array}
	\right] ,
\end{equation}
with entries for $\U$ on the first row and column, and $\B$ on the second row and column.
If $x_i = \H$, we have
\begin{equation}
	\label{eqn:HighConcentrationOutput}
	\mathbf{P}_{Y|X=\H} = 
	\left[
		\begin{array}{cc}
			1 - \alpha_\H & \alpha_\H \\
			\beta & 1 - \beta
		\end{array}
	\right] .
\end{equation}
These matrices thus specify  $p_{Y_{n+1} | X_{n+1},Y_n}(y_{n+1} | x_{n+1},y_n)$.

\section{Capacity of the intercellular transduction channel}

The main result of this paper is to show that capacity of the 
discrete-time intercellular
transduction channel is achieved by an iid 
input distribution for $0 < \alpha_\L, \alpha_\H, \beta < 1$.
Our approach is to start with the feedback capacity, show that it is
achieved with an iid input distribution, and conclude that feedback capacity
must therefore be equal to regular capacity; unusually for a channel
with memory, feedback does not increase capacity of our channel. In proving these 
statements, we rely 
on the important results on feedback capacity from \cite{yin03,che05}.

Let $C$ represent the capacity of the system without feedback,
and let $C_{\mathrm{iid}}$ represent the capacity of the system, restricting the
input distribution to be iid. Then the main result is formally stated as follows:
\begin{theorem}
	\label{thm:main}
	For the intercellular signal transduction channel
	described in this paper, if $0 < \alpha_\L, \alpha_\H, \beta < 1$,
	\begin{equation}
		\cfb = C = \ciid .
	\end{equation}
\end{theorem}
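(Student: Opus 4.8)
The plan is to prove the two-sided identity by sandwiching the three quantities and reducing everything to a single inequality. Restricting the input to be iid is a special case of an arbitrary (non-feedback) input, and allowing feedback can only enlarge the set of achievable codes, so the easy direction gives
\begin{equation}
	\ciid \le C \le \cfb .
\end{equation}
It therefore suffices to prove the reverse inequality $\cfb \le \ciid$; the chain then collapses to equality throughout, and all the work is on the feedback side.

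For the feedback capacity I would invoke the finite-state-channel characterizations of \cite{yin03,che05}, which express $\cfb$ as a limiting per-symbol maximization of the directed information $I(X^n\to Y^n)=\sum_{i=1}^n I(X^i;Y_i\given Y^{i-1})$ over causally-conditioned input distributions. The channel defined by (\ref{eqn:LowConcentrationOutput})--(\ref{eqn:HighConcentrationOutput}) has the special structure that its state \emph{is} its output: $Y_i$ depends on the past only through $(X_i,Y_{i-1})$, so that $H(Y_i\given X^i,Y^{i-1})=H(Y_i\given X_i,Y_{i-1})$, while conditioning-reduces-entropy gives $H(Y_i\given Y^{i-1})\le H(Y_i\given Y_{i-1})$. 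Subtracting the two yields the single-letter bound
\begin{equation}
	I(X^n\to Y^n)\le \sum_{i=1}^n I(X_i;Y_i\given Y_{i-1}),
\end{equation}
so that $\cfb$ is dominated by the best achievable long-run average of $I(X_i;Y_i\given Y_{i-1})$.

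The decisive step exploits the receptor's insensitivity in the bound state. Decomposing over the previous state,
\begin{equation}
	I(X_i;Y_i\given Y_{i-1})=\Pr[Y_{i-1}=\U]\,I(X_i;Y_i\given Y_{i-1}=\U)+\Pr[Y_{i-1}=\B]\,I(X_i;Y_i\given Y_{i-1}=\B),
\end{equation}
the second term vanishes identically, because when $Y_{i-1}=\B$ the transition to $Y_i$ is governed by $\beta$ alone and is statistically independent of $X_i$. Moreover the input used in the bound state influences neither this (zero) reward nor the state dynamics, since the unbinding probability $\beta$ is input-independent. Hence the only effective decision is the conditional input distribution $p(x\given\U)$ used in the unbound state, under which the channel acts as a memoryless binary-input binary-output channel with mutual information $I_\U(p)$ and induced binding rate $\bar\alpha(p)=p(\L)\alpha_\L+p(\H)\alpha_\H$. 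Because $0<\alpha_\L,\alpha_\H,\beta<1$ the induced two-state chain is ergodic with a unique stationary unbound probability $\pi_\U(p)=\beta/(\bar\alpha(p)+\beta)$, and the resulting average-reward maximization is attained by a stationary policy, giving
\begin{equation}
	\cfb \le \max_{p}\ \frac{\beta}{\bar\alpha(p)+\beta}\,I_\U(p).
\end{equation}

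To close the argument I would verify that the right-hand side is exactly $\ciid$: feeding an iid input with common distribution $p$ makes $Y_1,Y_2,\ldots$ a stationary ergodic Markov chain, and a direct steady-state computation of the information rate gives $\lim_n \frac1n I(X^n;Y^n)=\pi_\U(p)\,I_\U(p)$, since the bound-state symbols again contribute nothing. Maximizing over $p$ identifies $\ciid$ with the displayed bound, so $\cfb\le\ciid$ and the chain of equalities follows. I expect the main obstacle to be the middle step rather than the arithmetic: one must argue carefully, using \cite{yin03,che05}, that stationary input distributions depending only on the current output suffice to achieve $\cfb$, and that the extra freedom granted by feedback collapses precisely because the input is ``wasted'' whenever the receptor is bound. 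Everything hinges on this insensitivity property of the $\B$ state.
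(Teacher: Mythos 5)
Your proposal is correct in substance but takes a genuinely different route from the paper. The easy sandwich $\ciid \le C \le \cfb$ is identical in both, and both proofs hinge on the same physical fact (the output is insensitive to the input when $Y_{i-1}=\B$), but the paper never forms your single-letter upper bound. Instead it first invokes \cite[Thm.~1]{yin03} to restrict the optimization to inputs of the form $p_{X_i|Y_{i-1}}$ (its class $\mathcal{P}^*$), then shows in its Lemma~2 that the directed information does not depend on the input distribution used in the bound state, so that within the stationary subclass $\mathcal{P}^{**}$ one may set $p_{X|Y=\B}=p_{X|Y=\U}$, making the optimizing input iid; finally it invokes \cite[Thm.~4]{che05}, verifying its technical hypotheses (strong irreducibility, strong aperiodicity, and conditions on the matrices $\mathbf{R}_\U,\mathbf{R}_\B$) in an appendix lemma, to conclude that some stationary distribution attains $\cfb$. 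Your route replaces both citations with the inequality $I(X^i;Y_i\given Y^{i-1})\le I(X_i;Y_i\given Y_{i-1})$, valid for \emph{every} distribution in $\mathcal{P}$ because $H(Y_i\given Y^{i-1},X^i)=H(Y_i\given Y_{i-1},X_i)$ while conditioning reduces entropy, and then treats the residual problem as a two-state average-reward optimization. One caution: the stationarity step you defer is not literally what \cite{yin03,che05} supply — their theorems address the directed-information maximization, not your reduced control problem — so citing them there is a mismatch. Fortunately that step closes elementarily, which is what your reduction buys: let $V=\max_p \beta I_\U(p)/(\bar\alpha(p)+\beta)$, so that $I_\U(p)\le V(\bar\alpha(p)+\beta)/\beta$ for every $p$; writing $u_i=\Pr[Y_i=\U]$ and $p_i$ for the conditional input law at time $i$, the state recursion gives $u_{i-1}\bigl(\bar\alpha(p_i)+\beta\bigr)=\beta-(u_i-u_{i-1})$, and hence
\begin{equation}
\frac{1}{n}\sum_{i=1}^n u_{i-1}\, I_\U(p_i)\;\le\;\frac{V}{\beta}\cdot\frac{1}{n}\sum_{i=1}^n\bigl(\beta-(u_i-u_{i-1})\bigr)\;\le\;V+\frac{V}{n\beta}\xrightarrow[n\to\infty]{}V ,
\end{equation}
so no MDP theory and no appeal to \cite{che05} is needed at all. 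With that patch your argument bounds the directed information of \emph{all} causal-conditional policies directly and is more self-contained than the paper's proof; the paper's heavier machinery, in exchange, yields the structural conclusion that a feedback-capacity-achieving distribution exists within the stationary class $\mathcal{P}^{**}$ and applies to channels where bound-state insensitivity does not trivialize the reduced optimization.
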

The remainder of this section is dedicated to the proof of Theorem \ref{thm:main}.

We start with feedback capacity, which is defined using directed information.
The directed information between vectors $X^n$ and $Y^n$ \cite{Massey1990ieee-conf-IT} 
is given by
\begin{equation}
	\label{eqn:DirectedInfo}
	I(X^n \rightarrow Y^n) = \sum_{i=1}^n I(X^i;Y_i \given Y^{i-1}) .
\end{equation}
The per-symbol directed information rate is given by
\begin{equation}
	\lim_{n \rightarrow \infty} \frac{1}{n} I(X^n \rightarrow Y^n) .
\end{equation}
Feedback capacity, $\cfb$, is then given by
\begin{equation}
	\label{eqn:FeedbackCapacity}
	\cfb = \max_{p_{X^n|Y^n}(x^n\given y^n) \in \mathcal{P}}
	\left(
		\lim_{n \rightarrow \infty} \frac{1}{n} I(X^n \rightarrow Y^n) 
	\right) ,
\end{equation}
where $\mathcal{P}$ represents the set of causal-conditional feedback input distributions:
$p_{X^n|Y^n}(x^n\given y^n) \in \mathcal{P}$ if and only if
$p_{X^n|Y^n}(x^n\given y^n)$ can be written as
\begin{equation}
	p_{X^n|Y^n}(x^n\given y^n) = \prod_{k=2}^n 
	p_{X_k|X^{k-1},Y^{k-1}}(x_k \given x^{k-1}, y^{k-1})p_{X_1}(x_1) .
\end{equation}
%
%

Let $\mathcal{P}^* \subseteq \mathcal{P}$ represent the set of feedback input distributions that 
can be written
\begin{equation}
	p_{X^n|Y^n}(x^n\given y^n) = \prod_{i=2}^n 
	p_{X_i|Y_{i-1}}(x_i \given y_{i-1}) p_{X_1}(x_1).
\end{equation}
(Note that distributions in $\mathcal{P}^*$ need not be stationary:
$p_{X_i|Y_{i-1}}(x \given y)$ can depend on $i$.)
Then $\mathcal{P}^* \subset \mathcal{P}$ for $n>2$. The following result,
found in the literature, says there is at least one feedback-capacity-achieving
input distribution in $\mathcal{P}^*$.
\begin{lemma}
	Taking the maximum in (\ref{eqn:FeedbackCapacity}) 
	over $\mathcal{P}^* \subset \mathcal{P}$,
	\begin{equation}
		\label{eqn:Lemma1}
		\max_{p_{X^n|Y^n}(x^n \given y^n) \in \mathcal{P}^{*}} 
		\left(
			\lim_{n \rightarrow \infty} \frac{1}{n} I(X^n \rightarrow Y^n)
		\right)
		= \cfb .
	\end{equation}
\end{lemma}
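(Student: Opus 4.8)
The plan is to recognize the channel as a finite-state channel whose state is \emph{directly observed}, and then to apply the dynamic-programming characterization of feedback capacity supplied by \cite{yin03,che05}. The crucial structural fact is that the matrices (\ref{eqn:LowConcentrationOutput}) and (\ref{eqn:HighConcentrationOutput}) specify $p_{Y_i|X_i,Y_{i-1}}$, so the channel satisfies the Markov property
\begin{equation}
	p(y_i \given x^i, y^{i-1}) = p(y_i \given x_i, y_{i-1}).
\end{equation}
Thus the quantity that governs the transition at time $i$ is the single symbol $Y_{i-1}$, which plays the role of a channel state. Unlike a general finite-state channel, here that state is literally the previous output, hence known to the decoder and, through the feedback link, to the encoder as well. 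This is the feature that makes the small class $\mathcal{P}^*$ sufficient.

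First I would rewrite each term of the directed information using this Markov property. Since $H(Y_i \given X^i, Y^{i-1}) = H(Y_i \given X_i, Y_{i-1})$, we have
\begin{equation}
	I(X^i; Y_i \given Y^{i-1}) = H(Y_i \given Y^{i-1}) - H(Y_i \given X_i, Y_{i-1}),
\end{equation}
so the subtracted term depends on the input policy only through $p(x_i \given y_{i-1})$ and the marginal law of $Y_{i-1}$. Restricting to $\mathcal{P}^*$, where $p(x_i \given y^{i-1}) = p(x_i \given y_{i-1})$, one checks that the output process $Y^n$ is itself a (possibly nonstationary) Markov chain, whence $H(Y_i \given Y^{i-1}) = H(Y_i \given Y_{i-1})$ and each term collapses to $I(X_i; Y_i \given Y_{i-1})$. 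This shows concretely that policies in $\mathcal{P}^*$ interact with the directed information only through state-conditioned input distributions, and that the whole rate is determined by the two conditionals $p(x \given \U)$ and $p(x \given \B)$ together with the induced output chain.

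The core of the argument is the converse inclusion, that maximizing over all of $\mathcal{P}$ yields nothing larger than maximizing over $\mathcal{P}^*$. Since $\mathcal{P}^* \subset \mathcal{P}$ gives ``$\leq$'' for free, I would invoke the dynamic-programming formulation of \cite{yin03,che05}, in which $\cfb$ is the optimal long-run average reward of a Markov decision process whose sufficient statistic is the encoder's posterior belief about the channel state given $(X^{i-1}, Y^{i-1})$. Because the state $Y_{i-1}$ is observed exactly, this belief is a point mass supported on $Y_{i-1}$, and so the only payoff-relevant information at step $i$ is the value of $Y_{i-1}$ itself. The associated Bellman equation then admits an optimal policy of the form $p(x \given y)$, which by construction lies in $\mathcal{P}^*$, establishing (\ref{eqn:Lemma1}).

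I expect the main obstacle to be the coupling between time steps: the input distribution chosen at step $i$ shapes the law of $Y_i$, hence the state at step $i+1$, so one cannot optimize the terms $I(X^i; Y_i \given Y^{i-1})$ independently and a naive term-by-term argument fails. Managing this coupling, and justifying the interchange of the supremum with the $n \to \infty$ limit so that a single state-conditioned (stationary) policy is asymptotically optimal, is precisely the technical content provided by the cited dynamic-programming results. The work specific to this paper is therefore to verify that our two-state, observed-state channel meets the hypotheses of those theorems, after which membership of the optimizer in $\mathcal{P}^*$ is immediate.
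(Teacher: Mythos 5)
Your proposal is correct and takes essentially the same route as the paper: the paper's entire proof of this lemma is a one-line appeal to \cite[Thm.~1]{yin03}, and your argument simply makes explicit why that theorem applies here (the channel state is the previous output, observed at both ends via feedback, so the belief state collapses to a point mass and Markov policies of the form $p(x_i \given y_{i-1})$ suffice). The dynamic-programming detail you supply is precisely the content of the cited result rather than a departure from it, so there is nothing to fault beyond noting that the stationarity of the optimal policy is not needed for this lemma (the paper defers that to its Lemma~3 via \cite[Thm.~4]{che05}).
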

\begin{proof}
	The lemma follows from \cite[Thm. 1]{yin03}.
\end{proof}
%

It turns out that the feedback-capacity-achieving input distribution 
in $\mathcal{P}^*$ causes $Y^n$ to be a Markov chain (the reader may check; 
see also \cite{yin03,che05}).
That is,
\begin{equation}
	p_{Y_n|Y^{n-1}}(y_n \given y^{n-1}) = p_{Y_n|Y_{n-1}}(y_n \given y_{n-1}) .
\end{equation}
Using the following shorthand notation:
\begin{eqnarray}
	p_{L|B}^{(i)} & := & p_{X_i |Y_{i-1}}(\L \given \B) \\
	p_{L|U}^{(i)} & := & p_{X_i | Y_{i-1}}(\L \given \U) \\
	\bar{\alpha}^{(i)} & := & \alpha_\H (1-p_{\L|\U}^{(i)}) + \alpha_\L p_{\L|\U}^{(i)} ,
\end{eqnarray}
where the superscripts represent the time index,
the transition probability matrix for $Y$ at time $i$, $\mathbf{P}_Y^{(i)}$, is 
\begin{equation}
	\label{eqn:PYMatrix}
	\mathbf{P}_Y^{(i)} 
	=
	\left[
		\begin{array}{cc}
			1- \bar{\alpha}^{(i)} 
				& \bar{\alpha}^{(i)} \\
			\beta & 1-\beta
		\end{array}
	\right] ,
\end{equation}
with the first row and column corresponding to $U$, and the second row and column
corresponding to $B$. 


%

We now consider stationary distributions.
Let 
$\mathcal{P}^{**} \subset \mathcal{P}^*$ represent the distributions
that can be written with stationary $p_{X_i|Y_{i-1}}(x_i \given y_{i-1})$, i.e.,
with some time-independent distribution $p_{X|Y}$ such that 
\begin{equation}
	p_{X^n|Y^n}(x^n \given y^n) = \left(\prod_{k=2}^n p_{X|Y}(x_i \given y_{i-1})\right) p_{X_1}(x_1).
\end{equation}
Then:
\begin{lemma}
	\label{lem:Independent}
	Taking the maximum in (\ref{eqn:FeedbackCapacity}) 
	over $\mathcal{P}^{**} \subset \mathcal{P}^* \subset \mathcal{P}$,
		\begin{equation}
			\label{eqn:Lemma2}
			\max_{p_{X^n|Y^n}(x^n \given y^n) \in \mathcal{P}^{**}} 
			\left(
				\lim_{n \rightarrow \infty} \frac{1}{n} I(X^n \rightarrow Y^n)
			\right)
			= \ciid .
		\end{equation}
\end{lemma}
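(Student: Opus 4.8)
The plan is to reduce the directed information rate under $\mathcal{P}^{**}$ to a single-letter stationary quantity, exploit the fact that the receptor is insensitive to the input while in the bound state $\B$ to show this quantity depends on the feedback law only through the conditional $p_{X|Y}(\L\given\U)$, and then observe that the very same single-letter quantity arises from an iid input with $\Pr[X=\L]$ equal to that conditional. The two maximizations are then over one scalar ranging through the same interval, so they must coincide.

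First I would show that for every distribution in $\mathcal{P}^{**}$ the rate in (\ref{eqn:Lemma2}) is a stationary single-letter mutual information. Since the input law depends on the past only through $Y_{i-1}$, the output $Y^n$ is a time-homogeneous Markov chain with transition matrix $\mathbf{P}_Y$ in (\ref{eqn:PYMatrix}), so $H(Y_i\given Y^{i-1})=H(Y_i\given Y_{i-1})$; and because the channel obeys $p(y_i\given x^i,y^{i-1})=p(y_i\given x_i,y_{i-1})$, we also have $H(Y_i\given X^i,Y^{i-1})=H(Y_i\given X_i,Y_{i-1})$. Subtracting gives $I(X^i;Y_i\given Y^{i-1})=I(X_i;Y_i\given Y_{i-1})$ for each term of (\ref{eqn:DirectedInfo}). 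As $0<\alpha_\L,\alpha_\H,\beta<1$, the chain $\mathbf{P}_Y$ is irreducible and aperiodic, so the Ces\`aro average converges to the stationary value $I(X;Y\given Y_{\mathrm{prev}})=\pi_\U\,I(X;Y\given Y_{\mathrm{prev}}=\U)+\pi_\B\,I(X;Y\given Y_{\mathrm{prev}}=\B)$, where $(\pi_\U,\pi_\B)$ is the stationary distribution of $\mathbf{P}_Y$ and the initial law $p_{X_1}$ washes out under the $1/n$ scaling.

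The decisive step is that the bound-state term vanishes: when $Y_{\mathrm{prev}}=\B$ the transition to $Y$ has probability $\beta$ regardless of $X$, so $Y$ is independent of $X$ and $I(X;Y\given Y_{\mathrm{prev}}=\B)=0$. Hence the rate equals $\pi_\U\,I(X;Y\given Y_{\mathrm{prev}}=\U)$, which depends on the feedback law only through $p_{\L|\U}:=p_{X|Y}(\L\given\U)$: the conditional mutual information is that of the binary channel sending $\L\mapsto\B$ with probability $\alpha_\L$ and $\H\mapsto\B$ with probability $\alpha_\H$ under input mass $p_{\L|\U}$, while by (\ref{eqn:PYMatrix}) the stationary weight $\pi_\U=\beta/(\bar{\alpha}+\beta)$ depends on the law only through $\bar{\alpha}$, hence only through $p_{\L|\U}$. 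In particular $p_{\L|\B}$ is irrelevant, the signature of the channel's insensitivity in state $\B$.

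Finally I would run the identical reduction for an iid input with $\Pr[X=\L]=q$. With no feedback $Y^n$ is again Markov with matrix (\ref{eqn:PYMatrix}) under $p_{\L|\U}\mapsto q$, and $\frac1n I(X^n;Y^n)$ converges to the same stationary form $\pi_\U\,I(X;Y\given Y_{\mathrm{prev}}=\U)$. Thus the $\mathcal{P}^{**}$ rate at parameter $p_{\L|\U}$ equals the iid rate at $q=p_{\L|\U}$, and maximizing each over the single relevant scalar, which ranges over $[0,1]$, yields identical optima, establishing (\ref{eqn:Lemma2}). I expect the main obstacle to be the single-letter reduction of the directed information under feedback — verifying the Markovianity of $Y^n$ for arbitrary (not merely optimal) laws in $\mathcal{P}^{**}$, and justifying the convergence of the Ces\`aro average by ergodicity — since the concluding match is immediate once the bound-state term is seen to drop out.
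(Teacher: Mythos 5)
Your proof is correct and takes essentially the same approach as the paper's: both arguments hinge on the observation that the output is independent of the input whenever the previous state is $\B$, so the directed information rate of a stationary feedback law depends only on $p_{X|Y}(\L \given \U)$ and can therefore be matched exactly by an iid input with the same marginal. Your write-up merely makes explicit the single-letter stationary formula and the Ces\`aro/ergodicity argument that the paper's term-by-term treatment leaves implicit.
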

%
%
\begin{proof}
	We start by showing that
	$I(X^i; Y_i \given Y^{i-1})$ is independent of $p_{L|B}^{(k)}$ for all $k$.
	There is a feedback-capacity-achieving input distribution in $\mathcal{P}^*$ (from Lemma 1).
	Using this input distribution,
	\begin{eqnarray}
		\lefteqn{I(X^i; Y_i \given Y^{i-1})} & & \nonumber \\ 
		& = & H(Y_i \given Y^{i-1}) - H(Y_i \given Y_{i-1}, X^i) \\
		\label{eqn:DirectedTermSimplified}
		& = & H(Y_i \given Y_{i-1}) - H(Y_i \given Y_{i-1}, X_i) .
	\end{eqnarray}
	where (\ref{eqn:DirectedTermSimplified}) follows since (by definition)
	$Y_i$ is conditionally independent of $X^{i-1}$ given $Y_{i-1}$,
	and since $Y^i$ is first-order Markov. 
	Expanding (\ref{eqn:DirectedTermSimplified}),
	\begin{eqnarray}
		\nonumber
		\lefteqn{I(X^i; Y_i \given Y^{i-1}) = } & & \\ 
		\label{eqn:DirectedInfoSums}
		& & \sum_{y_{i-1}} p_{Y_{i-1}}(y_{i-1}) \sum_{x_i} p_{X_i|Y_{i-1}}(x_i \given y_{i-1}) \\
		\nonumber
		& & \cdot \sum_{y_i} p_{Y_i | Y_{i-1}, X_i}(y_i | y_{i-1}, x_i)
			\log \frac{p_{Y_i | Y_{i-1}, X_i}(y_i | y_{i-1}, x_i)}
			{p_{Y_i | Y_{i-1}}(y_i | y_{i-1})} .
	\end{eqnarray}
	From (\ref{eqn:PYMatrix}), 	
	$p_{Y_{i-1}}(y_{i-1})$ is calculated from parameters in 
	$\mathbf{P}_Y^{(i)}$ and the initial state,
	so $p_{Y_{i-1}}(y_{i-1})$ is independent of $p_{L|B}^{(k)}$ for all $k$.
	Further, everything under the last sum (over $y_i$) is independent of $p_{L|B}^{(k)}$, from 
	(\ref{eqn:PYMatrix}) and the definition of $p_{Y_i | Y_{i-1}, X_i}(y_i \given y_{i-1}, x_i)$.
	There remains the term $p_{X_i|Y_{i-1}}(x_i \given y_{i-1})$, which is dependent on
	$p_{L|B}^{(i-1)}$ when $y_{i-1} = B$.
	However, if $y_{i-1} = B$, then 
	\begin{eqnarray}
		\nonumber \lefteqn{\sum_{y_i} p_{Y_i | Y_{i-1}, X_i}(y_i \given B, x_i)
			\log \frac{p_{Y_i | Y_{i-1}, X_i}(y_i \given B, x_i)}
			{p_{Y_i | Y_{i-1}}(y_i \given B)}} & & \\
		\label{eqn:InnerSumZero}
		& = & \sum_{y_i} p_{Y_i | Y_{i-1}}(y_i \given B)
			\log \frac{p_{Y_i | Y_{i-1}}(y_i \given B)}
			{p_{Y_i | Y_{i-1}}(y_i \given B)} \\
			& = &  \sum_{y_i} p_{Y_i | Y_{i-1}}(y_i \given B) \log 1 \\
		& = & 0 ,
	\end{eqnarray}
	where (\ref{eqn:InnerSumZero}) follows since $y_i$ is independent of $x_i$ in state $B$.
	Thus, the entire expression is independent of $p_{L|B}^{(k)}$ for all $k$.
	Moreover, from (\ref{eqn:DirectedInfo}), directed information
	is independent of $p_{L|B}^{(k)}$ for all $k$.
	
	To prove (\ref{eqn:Lemma2}), distributions in $\mathcal{P}^{**}$ have
	$p_{L|U}^{(1)} = p_{L|U}^{(2)} = \ldots$, and 
	$p_{L|B}^{(1)} = p_{L|B}^{(2)} = \ldots$. 
	Since $I(X^i; Y_i \given Y^{i-1})$ is independent of $p_{L|B}^{(k)}$ for all $k$ (by the 
	preceding argument),
	we may set $p_{L|B}^{(k)} = p_{L|H}^{(k)}$ for all $k$, 
	without changing $I(X^i; Y_i \given Y^{i-1})$. Thus, inside $\mathcal{P}^{**}$,
	there exists a maximizing input 
	distribution that is independent for each channel use. By the definition
	of $\mathcal{P}^{**}$, that maximizing input distribution is iid, and there cannot exist
	an iid input distribution outside of $\mathcal{P}^{**}$. 
\end{proof}
%

Finally, we must show that feedback capacity is itself achieved by
a stationary input distribution.
To do so, we rely on \cite[Thm. 4]{che05}, which states that there is a feedback-capacity-achieving
input distribution in $\mathcal{P}^{**}$, as long as several technical conditions are satisfied.
Stating the conditions and proving that they hold requires restatement of 
definitions from \cite{che05}, so we give this result in the
appendix as Lemma 3.

Up to now, we have dealt only with feedback capacity.
We now return to the proof of Theorem \ref{thm:main}, where we relate these 
results to the regular capacity $C$.
\begin{proof}
%
	From Lemma 1, $\cfb$ is satisfied by an input distribution in $\mathcal{P}^*$.
	From Lemma 2, if we restrict ourselves to the stationary 
	input distributions $\mathcal{P}^{**}$ (where $\mathcal{P}^{**} \subset \mathcal{P}^*$), 
	then the feedback capacity is $\ciid$. From Lemma 3, the conditions of \cite[Thm. 4]{che05} are
	satisfied, which implies that there is a 
	feedback-capacity-achieving input distribution in $\mathcal{P}^{**}$.
	Therefore,
	\begin{equation}
		\cfb = \ciid .
	\end{equation}
	Considering the regular capacity $C$, $\cfb \geq C$, since the receiver has the option to ignore feedback; and $C \geq \ciid$,
	since an iid input distribution is a possible (feedback-free) input distribution. Thus,
	$\cfb = C = \ciid$.
\end{proof}

Finally,
if the input distribution is iid, then $Y^n$ is a Markov chain (see also the discussion after Lemma 1), and the mutual information rate can be expressed in closed form. Let 
$\mathcal{H}(p) = - p \log_2 p - (1-p) \log_2 (1-p)$ represent the binary
entropy function. In the iid input distribution, let $p_\L$ and $p_\H$ represent the probability of 
low and high concentration, respectively. Then
\begin{eqnarray}
	\lefteqn{\lim_{n \rightarrow \infty} \frac{1}{n} I(X;Y)} & & \nonumber \\
	& = & H(Y_n \given Y_{n-1}) - H(Y_n \given X_n,Y_{n-1}) \\
	& = & \frac{\mathcal{H}(\alpha_\H p_\H + \alpha_\L p_\L) 
		- p_\H \mathcal{H}(\alpha_\H) - p_\L \mathcal{H}(\alpha_\L)}
		{1 + (\alpha_\H p_\H + \alpha_\L p_\L)/\beta} .
\end{eqnarray}
Maximizing this expression with respect to $p_\L$ and $p_\H$  
(with appropriate constraints) gives the capacity.
It is straightforward to show that the largest possible value of the capacity is obtained in the limit 
$\alpha_L,\beta \to 0$ and $\alpha_H \to 1$; 
in this case the capacity is exactly $C=\log\phi\approx 0.694242$ (bits per time step), where 
$\phi=(1+\sqrt{5})/2$; this capacity is achieved when $p_H=\phi-1\approx0.381966$.  

\section{Acknowledgments}
The authors thank Robin Snyder and Marshall Leitman for their comments, 
and Toby Berger for giving us a copy of \cite{yin-unpublished}.

\appendix

We start by defining strong irreducibility and strong aperiodicity for $Y^n$,
assuming that the input distribution is in $\mathcal{P}^*$ (i.e., $Y^n$ is a Markov chain).
Recalling (\ref{eqn:LowConcentrationOutput})-(\ref{eqn:HighConcentrationOutput}),
let $\hat{\mathbf{P}} = [\hat{P}_{ij}]$ represent a $2 \times 2$ $\{0,1\}$ matrix with elements
\begin{equation}
	\hat{P}_{ij} =
	\left\{
		\begin{array}{cl}
			1, & \min_{k \in \{\L,\H\}} P_{Y|X=k,ij} > 0 \\
			0, & \mathrm{otherwise} ,
		\end{array}
	\right.
\end{equation}
and for positive integers 
$\ell$, let $\hat{P}_{ij}^\ell$ represent the $i,j$th element of $\hat{\mathbf{P}}^\ell$.
Further, for the $i$th diagonal element of the $\ell$th matrix power $\hat{P}_{ii}^\ell$, 
let $\mathcal{D}_i$ contain the set of integers $\ell$ such that 
$\hat{P}_{ii}^\ell \neq 0$. 
Then: 
\begin{itemize}
	\item $Y^n$ is strongly irreducible if, for each pair $i,j$, there exists 
an integer $\ell > 0$
such that $\hat{P}_{ij}^\ell \neq 0$; and
	\item If $Y^n$ is strongly irreducible, it is also strongly aperiodic if,
	for all $i$, the greatest common divisor of $\mathcal{D}_i$ is 1.
\end{itemize}
These conditions are described in terms of graphs in \cite{che05}, but our description
is equivalent.

%
\begin{lemma}
	If $0 < \alpha_\L, \alpha_\H, \beta < 1$,  the conditions of \cite[Thm. 4]{che05}
	are satisfied, namely:
	\begin{enumerate}
		\item $Y^n$
		is strongly irreducible and strongly aperiodic.
		\item 
		For $i \in \{\U,\B\}$, let 
\begin{eqnarray}
	\lefteqn{\mathbf{R}_i = } & & \\
	& & \nonumber \left[ 
		\begin{array}{cc}
			p_{Y_t|X_t,Y_{t+1}}(\B \given \L, i) & p_{Y_t|X_t,Y_{t+1}}(\B \given \H, i) \\
			p_{Y_t|X_t,Y_{t+1}}(\U \given \L, i) & p_{Y_t|X_t,Y_{t+1}}(\U \given \H, i)
		\end{array}
	\right] ,
\end{eqnarray} 
and  
let $I(p,\mathbf{R}_i) = I(Y_1;X_1 \given Y_0 = i)$, where the input distribution is 
$p \in \mathcal{P}^*$, and the input-output probabilities are given by $\mathbf{R}_i$.
Then (reiterating \cite[Defn. 6]{che05}) for the set of possible input distributions in $\mathcal{P}^*$, 
and for all $i \in \{\U,\B\}$, 
there exists a subset $\tilde{\mathcal{P}}^*$ satisfying%
\begin{enumerate}
	\item $\{\mathbf{R}_i p : p \in \mathcal{P}^*\} = \{\mathbf{R}_i p : p \in \tilde{\mathcal{P}}^*\}$. 
	\item For any $r \in \{\mathbf{R}_ip : p \in \mathcal{P}^*]\}$,
	\begin{eqnarray}
	\label{eqn:Defn6Part2}
		\left\{ \arg \max_{p:p\in\mathcal{P}^* \atop \mathbf{R}_i p=r} I(p,\mathbf{R}_i) \right\} 
		\cap 
		\left\{ \arg \max_{p:p\in\tilde{\mathcal{P}}^* \atop \mathbf{R}_i p=r} I(p,\mathbf{R}_i) \right\}  & &
		\nonumber \\
		 \neq \emptyset & &
	\end{eqnarray}
	\item There exists a positive constant $\lambda$ such that
	\begin{equation}
		\frac{\partial I(p_2,\mathbf{R}_i)}{\partial \ell} - \frac{\partial I(p_1,\mathbf{R}_i)}{\partial \ell}
		\leq
		- \lambda || p_2 - p_1 ||
	\end{equation}
	for any nonidentical $p_1,p_2 \in \tilde{\mathcal{P}}^*$, where $\ell$ is in the direction from 
	$p_1$ to $p_2$, and the norm is the Euclidean vector norm.
\end{enumerate}
\end{enumerate}
\end{lemma}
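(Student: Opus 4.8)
The plan is to verify the two requirements of \cite[Thm.~4]{che05} separately, dispatching item~1 by inspection and concentrating the effort on item~2, whose three sub-conditions I abbreviate (2a)--(2c). The guiding principle throughout is the same insensitivity of the bound state that drove Lemma~\ref{lem:Independent}: when $Y_{i-1}=\B$ the next output is governed by $\beta$ alone and carries no information about the input, cf. \eqref{eqn:InnerSumZero}.

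For item~1, I would first note that, because $0 < \alpha_\L, \alpha_\H, \beta < 1$, every entry of \eqref{eqn:LowConcentrationOutput} and \eqref{eqn:HighConcentrationOutput} lies strictly in $(0,1)$: the entries are $1-\alpha_\L,\alpha_\L,\beta,1-\beta$ and the analogues with $\alpha_\H$. Hence $\min_k P_{Y|X=k,ij}>0$ for every $(i,j)$, so $\hat{\mathbf{P}}$ is the $2\times 2$ all-ones matrix. Strong irreducibility then holds with $\ell=1$ for each pair, and strong aperiodicity holds because $\hat{P}_{ii}^1\neq 0$ places $1\in\mathcal{D}_i$ and forces $\gcd(\mathcal{D}_i)=1$. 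This settles item~1 in a few lines.

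Item~2 requires exhibiting, for each $i\in\{\U,\B\}$, a subset $\tilde{\mathcal{P}}^*$ with the stated properties, and I would treat the two states differently. For $i=\B$ I would show that $I(p,\mathbf{R}_\B)=I(Y_1;X_1\given Y_0=\B)=0$ for all $p\in\mathcal{P}^*$, and that a short stationary-distribution computation makes the backward distribution $\mathbf{R}_\B p$ constant on $\mathcal{P}^*$: indeed $p_{Y_t\given Y_{t+1}}(\U\given\B)=\pi_\U\bar\alpha/\pi_\B=\beta$, the dependence on $\bar\alpha$ cancelling. Taking $\tilde{\mathcal{P}}^*$ to be a singleton then gives (2a) because the image is a single point, makes (2c) vacuous (no non-identical pairs), and gives (2b) trivially since every $p$ attains the common value $I=0$. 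For $i=\U$, both $\mathbf{R}_\U p=[\bar\alpha,\,1-\bar\alpha]^{\mathsf T}$ and $I(p,\mathbf{R}_\U)$ depend on $p$ only through $\bar\alpha=\alpha_\H(1-p_{\L\given\U})+\alpha_\L p_{\L\given\U}$, i.e. only through $p_{\L\given\U}$, so the direction $p_{\L\given\B}$ is degenerate exactly as in Lemma~\ref{lem:Independent}. I would take $\tilde{\mathcal{P}}^*$ to be the one-parameter family obtained by fixing $p_{\L\given\B}$ and letting $p_{\L\given\U}$ range over $[0,1]$; then (2a) holds because collapsing the degenerate direction leaves $\mathbf{R}_\U p$ unchanged, and (2b) holds because the constraint $\mathbf{R}_\U p=r$ pins down $p_{\L\given\U}$ (the affine map $p_{\L\given\U}\mapsto\bar\alpha$ is injective when $\alpha_\H>\alpha_\L$), after which $I$ is determined and the maximizing fibre meets $\tilde{\mathcal{P}}^*$.

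The hard part will be (2c), the uniform strong concavity. On $\tilde{\mathcal{P}}^*$ for $i=\U$ the mutual information reduces to the scalar function $I(p_{\L\given\U})=\mathcal{H}(\bar\alpha)-p_{\L\given\U}\mathcal{H}(\alpha_\L)-(1-p_{\L\given\U})\mathcal{H}(\alpha_\H)$, so $\|p_2-p_1\|$ collapses to $|p_{\L\given\U,2}-p_{\L\given\U,1}|$. Differentiating twice gives $d^2I/dp_{\L\given\U}^2=-(\alpha_\H-\alpha_\L)^2/(\bar\alpha(1-\bar\alpha)\ln 2)$, and since $\bar\alpha\in[\alpha_\L,\alpha_\H]\subset(0,1)$ forces $\bar\alpha(1-\bar\alpha)\le 1/4$, I obtain $d^2I/dp_{\L\given\U}^2\le-4(\alpha_\H-\alpha_\L)^2/\ln 2=:-\lambda<0$; integrating this bound along the segment from $p_1$ to $p_2$ yields the required estimate with this explicit $\lambda$, as long as $\alpha_\H>\alpha_\L$. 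I expect the two fiddly points to be (i) checking that the directional-derivative and norm conventions of \cite[Defn.~6]{che05} really reduce to this scalar computation, and (ii) the degenerate case $\alpha_\H=\alpha_\L$, where $I\equiv 0$ even for $i=\U$ and a singleton $\tilde{\mathcal{P}}^*$ must be used; there the channel carries no information, so $\ciid=\cfb=0$ and the conclusion of Theorem~\ref{thm:main} is unaffected.
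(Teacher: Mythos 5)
Your proposal is correct, and for most of the lemma it coincides with the paper's own proof: item 1 is the identical all-ones $\hat{\mathbf{P}}$ argument (your observation that $1 \in \mathcal{D}_i$ already forces $\gcd(\mathcal{D}_i)=1$ is a slightly cleaner phrasing), and for $i=\B$ you make the same choice of a singleton $\tilde{\mathcal{P}}^*$, with $\mathbf{R}_\B p = [1-\beta,\ \beta]^{\mathsf{T}}$ constant and $I(p,\mathbf{R}_\B)\equiv 0$; note your stationary-distribution computation of the backward probability $p_{Y_t|Y_{t+1}}(\U \given \B)$ is superfluous, since constancy of $\mathbf{R}_\B p$ follows directly from the identical columns, which is all the paper uses. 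The genuine divergence is the state $\U$ when $\alpha_\L \neq \alpha_\H$: the paper disposes of this case in one line by noting that $\mathbf{R}_\U$ then has rank 2 and invoking \cite[Lem. 6]{che05}, which asserts that full-rank matrices automatically satisfy the conditions of \cite[Defn. 6]{che05}, whereas you verify the definition by hand—constructing the one-parameter family $\tilde{\mathcal{P}}^*$, checking conditions (a)--(b) via injectivity of the affine map $p_{\L\given\U}\mapsto\bar{\alpha}$, and establishing (c) from the explicit bound $d^2I/dp_{\L\given\U}^2 = -(\alpha_\H-\alpha_\L)^2/(\bar{\alpha}(1-\bar{\alpha})\ln 2) \leq -4(\alpha_\H-\alpha_\L)^2/\ln 2$. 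That computation is correct, and integrating it along the segment from $p_1$ to $p_2$ does yield the required inequality; the directional-derivative and norm conventions you flag only rescale $\lambda$ by a harmless constant factor (e.g., $\sqrt{2}$), so the argument survives. Your handling of the degenerate case $\alpha_\L=\alpha_\H$ (reduce to the $\mathbf{R}_\B$ argument with a singleton) is exactly the paper's first case for $\mathbf{R}_\U$. The trade-off between the two routes: the paper's citation of \cite[Lem. 6]{che05} is shorter but opaque without that reference in hand, while your verification is self-contained and produces an explicit modulus of strong concavity, at the cost of the bookkeeping about which coordinates of $p$ the matrix $\mathbf{R}_\U$ actually sees.
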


\begin{proof}
	To prove the first part of the lemma, if $0 < \alpha_\L,\alpha_\H,\beta < 1$, then
	$\hat{\mathbf{P}}$ is an all-one matrix, so
	$Y^n$ is strongly irreducible (with $\ell = 1$); further, since the positive powers of an all-one
	matrix can never have zero elements, $\mathcal{D}_i$ contains all positive integers from 1 to $n$,
	whose greatest common divisor is 1, so $Y^n$ is strongly aperiodic.
	
	To prove the second part of the lemma,
	we first show that the definition is satisfied for $\mathbf{R}_\B$, given by
\begin{equation}
	\label{eqn:QB}
	\mathbf{R}_\B = \left[ \begin{array}{cc} 1-\beta & 1-\beta \\ \beta & \beta \end{array}\right] .
\end{equation}
We choose the subset 
$\tilde{\mathcal{P}}^*$ to consist of a single point $p \in \mathcal{P}^*$ (it can be any point, as all
points give the same result).
The columns of $\mathbf{R}_\B$ are
identical, since the output is not dependent
on the input in state $\B$.
Then for every $p\in\mathcal{P}^*$, 
\begin{equation}
	\mathbf{R}_\B p = 
		\left[ \begin{array}{cc} 1-\beta & 1-\beta \\ \beta & \beta \end{array}\right] 
		\left[ \begin{array}{c} p_{\L|\B} \\ p_{\H|\B} \end{array}\right] 
	= \left[ \begin{array}{c} 1-\beta \\ \beta \end{array}\right] .
\end{equation}
This is also true of the single point in $\tilde{\mathcal{P}}^*$,
so condition 1 is satisfied.
Similarly, by inspection of (\ref{eqn:QB}), when $Y_0 = \B$, 
the output $Y_1$ is not dependent on the input $X_1$, so 
$I(p,\mathbf{R}_\B) = 0$ for all $p \in \mathcal{P}$. 
Since all $p \in \mathcal{P}^*$ ``maximize'' $I(p,\mathbf{R}_\B)$
and have identical values of $\mathbf{R}p$ (including the single point in $\tilde{\mathcal{P}}^*$),
then the single point $p \in \tilde{\mathcal{P}}^*$ is always in both sets, and the intersection
(\ref{eqn:Defn6Part2})
is nonempty; so condition 2 is satisfied. There is only one point in 
$\tilde{\mathcal{P}}^*$, so condition 3 is satisfied trivially.

Now we show the conditions are satisfied for $\mathbf{R}_\U$, given by
\begin{equation}
	\mathbf{R}_\U = \left[ \begin{array}{cc} \alpha_\L & \alpha_\H \\ 1-\alpha_\L & 1-\alpha_\H \end{array}\right] .
\end{equation}
There are two possibilities. First, suppose $\alpha_\L = \alpha_\H$, so that $\mathbf{R}_\U$ has the same
form as $\mathbf{R}_\B$; then $\mathbf{R}_\U$ satisfies the conditions by the same argument that we gave above.
Second, suppose $\alpha_\L \neq \alpha_\H$; then $\mathbf{R}_\U$ has rank 2, so 
by \cite[Lem. 6]{che05}, $\mathbf{R}_\U$ satisfies the conditions.
\end{proof}
Closely related 
results on feedback capacity of binary channels were given in \cite{yin-unpublished} (unfortunately, unpublished). 

\bibliographystyle{ieeetr} 
\bibliography{infotheory,Dicty,stoch_chem,neuroscience,MCell,PJT,isit2013}

\end{document}